\documentclass[letterpaper, 10 pt, conference]{ieeeconf}  % Comment this line out if you need a4paper

\IEEEoverridecommandlockouts                              % This command is only needed if 
\usepackage{graphics} % for pdf, bitmapped graphics files
\usepackage{epsfig} % for postscript graphics files
\usepackage{times} % assumes new font selection scheme installed
\usepackage{amsmath} % assumes amsmath package installed
\usepackage{amssymb}  % assumes amsmath package installed
\usepackage{mymacros} % Macros by Matthias, defiled by Tamas
\usepackage{url} % For code availability statement
\usepackage{tikz}
\usepackage{mathtools}
\mathtoolsset{showonlyrefs}

\usepackage{algorithm}
\usepackage{algpseudocode}

\usepackage{bm}
\usepackage{xcolor}
\newtheorem{theorem}{Theorem}

\newtheorem{remark}{Remark}

\renewcommand{\tilde}{\widetilde}
\renewcommand{\hat}{\widehat}

\renewcommand{\ba}{\mathbf{a}}
\renewcommand{\bz}{\mathbf{z}}

\newcommand{\res}{\mathfrak{r}}
\newcommand{\tres}{\tilde{\res}}

\newcommand{\fomspace}{{\frak{G}_\xi^n}}
\newcommand{\romspace}{{\frak{G}_\xi^r}}

\newcommand{\blambda}{\boldsymbol{\lambda}}

\title{\LARGE \bf
$H_2$-optimal model order reduction using hyperbolic geometry
}

\author{Andrea Angino$^{1}$, Tam\'as D\'ozsa$^{1}$, and Matthias Voigt$^{1}$ % <-this % stops a space
\thanks{The research of AA was funded in part by the Swiss National
Science Foundation (SNSF) grant No. 224943. The research of TD has received funding from the Swiss
Government Excellence Scholarship No. 2025.0057. AA and TD have contributed equally to this work and should both be considered first authors.}% <-this % stops a space
\thanks{$^{1}$Andrea Angino, and Matthias Voigt are with UniDistance Suisse, Faculty of Mathematics and Computer Science, Brig-VS, Switzerland, Tam\'as D\'ozsa is with King Abdullah University of Science and Technology, Computer, Electrical and Mathematical Sciences and Engineering Division, Thuwal, Kingdom of Saudi Arabia
        {{\tt andrea.angino@unidistance.ch, tamasdzs@gmail.com, \newline matthias.voigt@fernuni.ch}%
}}}

\begin{document}

%%%%%%%%%%%%%%%

\onecolumn
\begin{titlepage}
\thispagestyle{empty}

\begin{center}

\vspace*{2.5cm}

{\LARGE\bfseries
$H_2$-optimal model order reduction using hyperbolic geometry
\par}

\vspace{1.2cm}

{\large
Andrea Angino, Tam\'as D\'ozsa, Matthias Voigt
\par}

\vspace{2cm}

{\large\bfseries Author Accepted Manuscript}

\vspace{0.7cm}

\begin{minipage}{0.82\textwidth}
\centering

This document is the accepted version of the manuscript and is not the
final published version.

\vspace{0.8cm}

This paper has been accepted for presentation and publication in the
proceedings of the 65th IEEE Conference on Decision and Control
(CDC 2026), Honolulu, Hawaii, USA, December 15--18, 2026.

\vspace{0.8cm}

The final published version, including the DOI and page numbers,
is not yet available. This cover page will be updated with the full
bibliographic reference and a link to the final published article
once the proceedings become available.

\vspace{0.8cm}

Until then, please refer to this work as:

\medskip

A. Angino, T. D\'ozsa, and M. Voigt,
``$H_2$-optimal model order reduction using hyperbolic geometry,''
accepted for presentation at the 65th IEEE Conference on Decision
and Control (CDC 2026), Honolulu, Hawaii, USA, 2026.

\vspace{1.2cm}

\textcopyright{} 2026 IEEE.

\end{minipage}

\vfill

\end{center}
\end{titlepage}

%ERRATUM:
\clearpage
\thispagestyle{empty}
\vspace*{2.5cm}
\begin{center}
{\Large\bfseries Erratum}

%\vspace{0.4cm}
%{\normalsize\itshape
%$H_2$-optimal model order %reduction using hyperbolic geometry}
\end{center}

\vspace{1.0cm}

In Theorems~\ref{thm:MLcond} and~\ref{thm:thmnewconds},
the reduced transfer function $\bG_r$ should be assumed
to have exact order $r$. Under the assumption of distinct
simple poles, this is equivalent to requiring
$\tres_j \neq 0$ for all $j=1,\ldots,r$.

This assumption is needed to apply the
Meier--Luenberger optimality conditions at all
$r$ mirror-image poles. Furthermore, it excludes
degenerate ROM representations with vanishing residues,
ensuring that all $r$ prescribed mirror-image poles
correspond to actual poles of $\bG_r$.
% In Theorem~\ref{thm:thmnewconds}, the reduced-order transfer function
% $\bG_r$ is additionally assumed to be of \emph{exact order $r$}.
% Under the assumption of pairwise distinct simple poles, this amounts to
% requiring that all residues in its pole--residue representation are
% nonzero, i.e.,
% \[
% \boxed{\tres_j \neq 0 \qquad (j=1,\ldots,r).}
% \]
% This nondegeneracy assumption ensures that the $r$ mirror-image pole
% parameters correspond to actual poles of the reduced-order model, so
% that the Meier--Luenberger conditions used in the proof apply at all
% $r$ points.

\vfill
\noindent\textit{The accepted manuscript follows unchanged.}
\clearpage

\clearpage
\twocolumn

%%%%%%%%%%%
\maketitle
\thispagestyle{empty}
\pagestyle{empty}

\begin{abstract}
We revisit the problem of approximating a linear time-invariant LTI model by a reduced-order model (ROM) in an $H_2$-optimal sense. A widely used method for computing such ROMs is the iterative rational Krylov algorithm (IRKA). However, in practice, the stability of the resulting ROM is not generally guaranteed. In this work, we develop a novel framework that can be used to obtain $H_2$-optimal models of single-input single-output LTI systems while enforcing stability of the ROM. We use Malmquist-Takenaka orthogonal expansions in the corresponding $H_2$-spaces and their connection to hyperbolic geometry to develop an equilibrium function. We show that the poles of an $H_2$-optimal ROM transfer function are defined by the zeros of this equilibrium function. The resulting nonlinear root-finding problem is solved using a Newton-type method formulated on the products of Poincar\'e disk manifolds, where updates are performed along hyperbolic geodesics ensuring that the stability constraint is satisfied at every iteration. We present numerical experiments on benchmark systems for both discrete- and continuous-time LTI MOR problems, including examples where IRKA produces unstable reduced models while the proposed method preserves stability throughout the iteration.
\end{abstract}

\section{Introduction}

Model order reduction (MOR) of linear time-invariant (LTI) systems is important for many control problems in engineering~\cite{djukic2012dynamic}. State-of-the-art MOR methods develop rigorous error bounds for the reduced-order model (ROM) (e.g., balanced truncation~\cite{gugercin2004survey}) or ensure that the ROM satisfies certain optimality conditions~\cite{gugercin2008h_2}. These optimality conditions and error bounds are usually expressed in the frequency domain and are imposed on the error transfer function between the full-order model (FOM) and the ROM. The methods referenced and developed in this work have proper rational transfer functions that belong to the  Hardy-Hilbert space $H_2(\Omega)$, where $\Omega \subset \C$ is an appropriately chosen open domain of analyticity. This assumption holds for transfer functions of any asymptotically stable LTI system and is therefore applicable in a large number of applications~\cite{szidar, feedback}. 

In this study, we propose new methods to obtain ROMs that satisfy the Meier-Luenberger necessary conditions of $H_2$-optimality~\cite{Meier}. These conditions assert that the transfer function of the ROM can only be an $H_2$-optimal approximant of the transfer function of the FOM, if it is an Hermite interpolant at a set of prescribed points that determine the poles of the ROM's transfer function and are often referred to as mirror image poles~\cite{gugercin2008h_2}. 

The state-of-the-art approach for computing a ROM that satisfies these conditions is the iterative rational Krylov algorithm (IRKA) introduced in~\cite{gugercin2008h_2}. In each iteration, IRKA applies Petrov-Galerkin projections to the FOM's state-space matrices. While the convergence of IRKA is only rigorously proven for state space symmetric systems~\cite{FLAGG2012688}, it is widely used in practice. Empirical evidence suggests that it provides high-quality ROMs for most LTI MOR problems~\cite{gugercin2008h_2}. 

Unfortunately, IRKA lacks formal guarantees on the stability of the recovered ROM. Furthermore, it is often observed that even when the final recovered ROM is asymptotically stable, IRKA provides unstable intermediate ROM candidates throughout the iteration~\cite{RiemIRKA}. This can result in unstable ROMs if IRKA is used with inappropriate early stopping criteria.
In~\cite{RiemIRKA} it is shown that IRKA can be interpreted as a Riemannian gradient descent method on the manifold of stable rational transfer functions embedded into $H_2$~\cite[Theorem 4.1, Theorem 7.2]{RiemIRKA}. As a result,~\cite{RiemIRKA} provides a modification of IRKA that uses a line search mechanism to avoid unstable ROM candidates throughout the optimization. The line search however depends on a step size parameter whose choice can influence the efficiency of the method. 

In this work, we take a different approach to construct ROMs that satisfy the $H_2$-optimality conditions. We develop a novel necessary optimality condition for discrete-time systems. In particular, we show that the mirror image poles of $H_2$-optimal ROMs coincide with the zeros of an equilibrium function. Thus, we can pose the $H_2$-optimal MOR problem as nonlinear zero finding problem. In this way we propose a parameter-free alternative to~\cite{RiemIRKA}. For simplicity, we restrict our investigation to single-input single-output (SISO) systems. We obtain the proposed optimality condition by exploiting the properties of the rational orthogonal Malmquist-Takenaka functions~\cite{soumelidis2017hyperbolic}. We will pursue similar results for the multiple-input multiple output (MIMO) case in a future work using the fact that these functions are well defined for matrix-valued $H_2$ spaces (see, e.g.,~\cite{vandenhof}).

 We propose a modified Newton step update that incorporates the hyperbolic geometry of the unit disk to recover the zeros of the equilibrium function. This modification prevents the optimization from leaving the stability region without tuning or even requiring line-search procedures as suggested in~\cite{RiemIRKA}. Additionally, the proposed framework also allows for the use of iterative schemes other than the proposed Newton-based approach. Our  Newton iteration is parameter-independent and guarantees stability by construction. Using the proposed hyperbolic updates, we construct a computationally efficient hybrid Newton-like algorithm that combines Euclidian and hyperbolic steps.

Finally, we employ a unitary transformation between the $H_2$-spaces containing the transfer functions of discrete- and continuous-time asymptotically stable LTI systems. This allows us to apply the proposed hyperbolic/hybrid Newton iteration to continuous-time systems without the need to obtain a new iterative method. We compare the performance of the proposed methods with IRKA in both continuous- and discrete-time settings using benchmark systems from~\cite{slicot}. 

Below we summarize the key contributions of our work:
\begin{enumerate}
    \item We obtain a novel necessary condition of $H_2$-optimality for asymptotically stable discrete-time SISO LTI ROMs. We show that any $H_2$-optimal ROM is defined by the zeros of an equilibrium function.
    
    \item We develop a stability-preserving Newton-like update rule for solving the resulting zero-finding problem on $\mathbb{D}^r$, where $\D$ is the open unit disk. 
    \item We propose an efficient, parameter-free Newton-like method which recovers the zeros of the equilibrium function, while naturally ensuring that the obtained ROMs are asymptotically stable in every step of the iteration.
    \item We use a unitary mapping between Hardy spaces on the open unit disk and the open right half-plane, enabling the same stability-preserving iteration to be applied to continuous-time SISO LTI systems via a conformal change of variables. 
    \item We demonstrate the proposed methods on benchmark SISO LTI systems in both discrete- and continuous-time, and compare against IRKA in terms of achieved $H_2$-error, convergence behavior, and stability of intermediate iterates.
    \end{enumerate}

The paper is organized as follows. Section~\ref{sec:prelims} reviews the necessary discrete-time LTI MOR background. Section~\ref{sec:novelTheor1} presents the new necessary $H_2$-optimality conditions. In Section~\ref{subsec:rgd}, we introduce novel hyperbolic correction steps for the discrete-time setting. Section~\ref{sec:continuous} extends the framework to continuous-time systems via a unitary transformation between Hardy spaces. Numerical results are given in Section~\ref{sec:exp}, and conclusions are drawn in Section~\ref{sec:conc}.

\section{Model order reduction in $H_2(\D)$}
\label{sec:prelims}

Consider the following single-input single-output LTI FOM in discrete-time, given by the state-space representation
\begin{align}
    \label{eq:ss}
    \begin{split}
        \bx(t+1) &= \bA \bx(t) + \bb \bu(t), \quad \bx(0) = 0 \\
        \by(t) &= \bc^\top \bx(t),
    \end{split}
\end{align}
where $\bA \in \C^{n \times n}$, $\bb,\,\bc \in \C^{n \times 1}$ and $\bx(t) \in \C^{n}$, $\bu(t),\,\by(t) \in \C$ for all $t \in \N := \{0,1,\ldots\}$. Here $\bx(t)$, $\by(t)$, and $\bu(t)$ are usually referred to as the state, output, and input at time $t$. We shall associate the order of a system with $n$. The goal of MOR is to identify a ROM of order $r \ll n$, i.e.,
\begin{align}\label{eq:statROM}
    \begin{split}
        \bx_r(t+1) &= \bA_r \bx_r(t) + \bb_r \bu(t), \quad \bx_r(0) = 0 \\
        \by_r(t) &= \bc_r^\top \bx_r(t),
    \end{split}
\end{align}
where $\bA_r \in \C^{r \times r}$, $\bb_r,\bc_r \in \C^{r \times 1}$, $\bx_r(t) \in \C^{r}$, and $\by_r(t) \in \C$ for $t \in \N$ such that $\by_r \approx \by$ for all inputs $\bu$. 

For many relevant MOR methods~\cite{gugercin2004survey, gugercin2008h_2} the FOM and ROM transfer functions are considered. They are obtained using the $\cZ$-transform, i.e.,
\begin{equation}
    \label{eq:Ztrans}
    \cZ[\bg](z) = \sum_{i=0}^{\infty} g_i z^i \quad (z \in \C, \, \bg = (g_0,g_1,\ldots) \in \ell),
\end{equation}
where $\ell$ denotes the vector space of complex sequences indexed by $\mathbb{N}$. The $\cZ$-transform is well defined, whenever $z \in \D \cup \T$, where
\begin{equation}
    \label{eq:disc}
     \D := \{ z \in \C: |z| < 1 \} \quad \text{and} \quad \T:=\{ z \in \C : |z| = 1 \}
\end{equation}
and 
\begin{equation*}
    \bg \in \ell_2 := \left\{ \bg \in \ell \;:\; \sum_{i=0}^{\infty} |g_i|^2 < \infty \right\}.
\end{equation*}
 Let us assume that $\bu \in \ell_2$ and that the FOM is asymptotically stable, i.e., $|\mu| < 1$ for all eigenvalues $\mu$ of $\bA$. In this case, we also have $\bx \in \ell_2$ and $\by \in \ell_2$ and the $\cZ$-transforms of $\bu$, $\bx$, and $\by$ are all well-defined. 
 
We now derive a transfer function representation for the FOM given in~\eqref{eq:ss} compatible with our $\cZ$-transform convention.
{
Using standard manipulations, the discrete-time transfer function is
\begin{equation*}
    \hat{\bG}(z)
    :=
    \bc^\top \left( \frac{1}{z}\bI_n - \bA \right)^{-1}\bb
    =
    z\,\bc^\top (\bI_n-z\bA)^{-1}\bb
\end{equation*} for $z \in \D \setminus \{0\}$. Let  $\mu_k$ be the eigenvalues of $\bA$. Assume they are simple, then $\hat{\bG}$ admits the partial fraction expansion
\begin{equation}
    \label{eq:freqRep}
    \hat{\bG}(z)=\sum_{k=1}^n \frac{z\,\res_k}{1-\mu_k z},
\end{equation}
where $\res_k \in \C$.
Motivated by~\eqref{eq:freqRep}, we introduce the function
\begin{equation}
    \label{eq:discTrfRaw}
    \bG(z):=\frac{1}{z}\hat\bG(z) = \bc^\top (\bI_n-z\bA)^{-1}\bb
    =
    \sum_{k=1}^n \frac{\res_k}{1-\mu_k z}.
\end{equation}
The latter expression continues analytically to $z=0$ with ${\bG}(0)=\bc^\top \bb$.
Consider the Hardy-Hilbert space \begin{equation}
    \label{eq:H2D}
    H_2(\D) := \left\{ \bG \in \frak{A}(\D) \;:\; \frac{1}{2 \pi} \int_{-\pi}^{\pi} \big| \bG\big(\eu^{\iu \theta}\big) \big|^2 \,\du \theta < \infty \right\},
\end{equation}
where $\frak{A}(\D)$ denotes the set of analytic functions $\D \to \C$.
Since $z \mapsto z$ is an inner function on $\D$, i.e., $|z|=1$ on $\T$, multiplication by $z$ preserves the $H_2(\D)$-norm. In particular, for any reduced model,
\[
\big\|\hat{\bG}-z\bG_r\big\|_{H_2(\D)} = \big\|\bG-\bG_r\big\|_{H_2(\D)},
\]
where  $\bG_r$ denotes the transfer function corresponding to Eq.~\eqref{eq:statROM}.
Therefore, the $H_2$-model reduction problem can be formulated equivalently in terms of $\bG$. By setting \(
\lambda_k := \overline{\mu_k},
\)
we arrive at the representation
\begin{equation}
    \label{eq:discTrf}
    \bG(z) = \sum_{k=1}^n \frac{\res_k}{1-\overline{\lambda_k} z}.
\end{equation}
The parameters $\lambda_k$ in~\eqref{eq:discTrf} will be referred to as mirror image poles of $\bG$. Notice that the poles of $\bG$ are given by
\begin{equation}
    \label{eq:discPoles}
    1 / \overline{\lambda}_k \quad (k=1,\ldots,n).
\end{equation}
Henceforth, we denote by $\bG_r$ the transfer function of the ROM, which matches the structure of~\eqref{eq:discTrf} and is described by the parameters $\tres_j$ and $\tlambda_j \ (j=1,\ldots,r)$.
}

We will assume that $\bG$ is stable, i.e., all its poles are outside $\overline{\D}$ and thus, $\bG \in H_2(\D)$. We shall also assume that $\bG$ has simple poles. In our method, we will also require thar $\bG_r$ satisfies the same conditions. These are standard conditions in interpolatory MOR. While we can represent discrete-time transfer functions with either definition, using~\eqref{eq:discTrf} allows us to exploit strong connections to hyperbolic geometry in Section~\ref{subsec:rgd}, which are later used to construct stability-preserving Newton updates. 

% ***

To formalize the considered MOR problem, we define the kernel function 
\begin{equation}\label{eq:ckerslice}
 \xi: \D \times \D \to \C, \quad \xi(\lambda,z) := \xi_{\lambda}(z) := \frac{1}{1 - \overline{\lambda} z}.
\end{equation}
Define
$\blambda := (\lambda_1,\ldots,\lambda_n)$ and  
\begin{multline}
    \label{eq:ROMspace}
    \fomspace := \left\{ \bG \in \myspan\left\{ \xi_{\lambda_1},\ldots,\xi_{\lambda_n}\right\} : \right. \\ \left. \blambda \in \D^n,\, \lambda_j \neq \lambda_k \text{ for } j \neq k \right\}.
\end{multline}
Note that $\fomspace \subset H_2(\D)$. Then we pose the discrete-time LTI MOR problem as minimizing the $H_2(\D)$-error functional $E : \romspace \to [0, \infty)$ given by
\begin{equation}
    \label{eq:H2errfunc}
    E(\bG_r) := {\| \bG - \bG_r \|}_{H_2(\D)},
\end{equation}
for a prescribed FOM transfer function $\bG \in \fomspace$.

We now recall some necessary conditions that $\bG_r \in \romspace$ needs to satisfy if it is a local minimizer of $E$ in~\eqref{eq:H2errfunc}. 
\begin{theorem}[Meier-Luenberger conditions~\cite{gugercin2008h_2, Meier}]
\label{thm:MLcond}
    Let $\bG \in \fomspace$ be a given FOM transfer function. Suppose that $\bG_r \in \romspace$ is a local minimizer of $E$ in~\eqref{eq:H2errfunc}, then
\begin{equation}
    \label{eq:MLC}
    \bG_r\big(\tlambda_j\big) = \bG\big(\tlambda_j\big), \quad \bG_r'\big(\tlambda_j\big) = \bG'\big(\tlambda_j\big) \quad (j=1,\ldots,r),
\end{equation}
where $\tlambda_j$ denote the mirror image poles of $\bG_r$.
\end{theorem}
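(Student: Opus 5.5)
We parametrize $\bG_r$ and differentiate $E^2$ with respect to the parameters. By the erratum, $\bG_r$ has exact order $r$, so
\begin{equation*}
\bG_r = \sum_{j=1}^r \tres_j\, \xi_{\tlambda_j}, \qquad \tres_j \neq 0, \qquad \tlambda_j \in \D \text{ pairwise distinct}.
\end{equation*}
The parameter set $(\tres,\tlambda)$ ranges over an open subset of $\C^r \times \D^r$, and the map $(\tres,\tlambda) \mapsto \bG_r$ is smooth into $H_2(\D)$. A local minimizer of $E$ on $\romspace$ is therefore a local minimizer of
\begin{equation*}
J(\tres,\tlambda) := E(\bG_r)^2 = \|\bG - \bG_r\|_{H_2(\D)}^2 .
\end{equation*}
Since $J$ is real-valued and the parameters are complex, stationarity will be expressed through Wirtinger derivatives: at an interior minimizer, $\partial J/\partial \overline{\tres_j} = 0$ and $\partial J / \partial \tlambda_j = 0$ (equivalently, the derivative with respect to $\overline{\tlambda_j}$ vanishes).

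The key tool is the reproducing property of $\xi$ in $H_2(\D)$: for $F \in H_2(\D)$ and $\lambda \in \D$,
\begin{equation*}
\langle F, \xi_\lambda \rangle_{H_2(\D)} = F(\lambda),
\end{equation*}
which follows from the power series of $\xi_\lambda(z) = \sum_k \overline{\lambda}^k z^k$ and Parseval. Differentiating this identity in $\overline{\lambda}$ (the map $\lambda \mapsto \xi_\lambda$ is antiholomorphic) gives
\begin{equation*}
\Big\langle F, \frac{\partial \xi_\lambda}{\partial \overline{\lambda}} \Big\rangle = F'(\lambda), \qquad \frac{\partial \xi_\lambda}{\partial \overline{\lambda}}(z) = \frac{z}{(1-\overline{\lambda} z)^2}.
\end{equation*}
The justification is that $\partial \xi_\lambda/\partial\overline\lambda$ lies in $H_2$ and difference quotients converge in $H_2$ norm.

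Let $e := \bG - \bG_r$ and write $J = \langle e, e\rangle$.

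\textbf{Residues.} Perturbing $\tres_j$ changes $\bG_r$ by $\delta \tres_j\, \xi_{\tlambda_j}$, so
\begin{equation*}
\frac{\partial J}{\partial \overline{\tres_j}} = -\langle e, \xi_{\tlambda_j}\rangle = -e(\tlambda_j).
\end{equation*}
Stationarity therefore gives $\bG_r(\tlambda_j) = \bG(\tlambda_j)$.

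\textbf{Mirror image poles.} $\bG_r$ depends antiholomorphically on $\tlambda_j$, so
\begin{equation*}
\frac{\partial J}{\partial \tlambda_j} = -\Big\langle e, \tres_j\, \frac{\partial \xi_{\tlambda_j}}{\partial \overline{\tlambda_j}} \Big\rangle = -\overline{\tres_j}\; e'(\tlambda_j).
\end{equation*}
Since $\tres_j \neq 0$, stationarity forces $e'(\tlambda_j) = 0$, i.e.\ $\bG_r'(\tlambda_j) = \bG'(\tlambda_j)$. This is exactly where the exact-order assumption from the erratum is used: if $\tres_j = 0$, the derivative condition is lost.

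The step needing most care is the bookkeeping of the Wirtinger calculus. One has to verify that
\begin{itemize}
\item the real gradient of $J$ vanishing at an interior point is equivalent to the vanishing of these conjugate derivatives;
\item conjugation conventions in the inner product (linear in the first slot) produce $e(\tlambda_j)$ and $e'(\tlambda_j)$ rather than their conjugates; this is harmless, since vanishing is unaffected.
\end{itemize}
I would handle both by explicitly expanding
\begin{equation*}
J(\tres + \epsilon\delta) = J - 2\,\mathrm{Re}\, \langle e, \epsilon\, \delta\, \xi_{\tlambda_j}\rangle + O(\epsilon^2)
\end{equation*}
for real $\epsilon$ and arbitrary complex direction $\delta$, and proceeding analogously for $\tlambda_j$. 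Requiring that the first-order term vanish for all $\delta$ yields the conditions directly. Openness of the parameter domain (distinct points in $\D$) guarantees these perturbations stay in $\romspace$.
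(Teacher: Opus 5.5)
Your proof is correct. The paper gives no proof of this theorem; it quotes the result from \cite{gugercin2008h_2, Meier}. Your argument, stationarity of $E^2$ in the residues and mirror image poles evaluated through the reproducing kernel $\xi_\lambda$, is essentially the standard proof behind those references carried over to $H_2(\mathbb{D})$. You use the erratum's condition $\tres_j \neq 0$ exactly where the derivative condition needs it.

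One cosmetic slip: $\langle F,\xi_\lambda\rangle_{H_2(\mathbb{D})}=F(\lambda)$ is holomorphic in $\lambda$, so you differentiate it with respect to $\lambda$, not $\overline{\lambda}$; the conjugate-linear second slot is what produces $\partial\xi_\lambda/\partial\overline{\lambda}$. The formula you end up with is right.
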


The state-of-the-art algorithm for computing a reduced model $\bG_r$ that satisfies Theorem~\ref{thm:MLcond} is IRKA. This method can be interpreted as a Krylov-type fixed point iteration, relying on repeated Petrov-Galerkin projections to update the mirror image poles $\tlambda_j$ and residues $\tres_j$ for $j=1,\ldots,r$.
Whenever IRKA converges to a ROM transfer function $\bG_r \in \romspace$, the resulting ROM satisfies the Meier-Luenberger conditions in Theorem~\ref{thm:MLcond}. However, IRKA does not enforce the constraint $\bG_r \in \romspace$ during the iteration. In particular, intermediate iterates may leave $\romspace$, and even the final iterate may (in principle) fail to belong to $\romspace$. In such cases, the assumptions underlying Theorem~\ref{thm:MLcond} are violated, and the optimality conditions are no longer guaranteed.
In contrast, the approach proposed in this work is constructed so that all iterates remain in $\romspace$ by design. More precisely, we reformulate the MOR problem in~\eqref{eq:H2errfunc} as a zero-finding problem on $\D^r$, and define fixed point iterations that induce a mapping
\(
\romspace \;\to\; \romspace.
\)
As a consequence, every intermediate reduced model is asymptotically stable. In particular, the admissibility condition $\bG_r \in \romspace$ required in Theorem~\ref{thm:MLcond} is satisfied along the entire iteration. The optimality conditions in Theorem~\ref{thm:MLcond} are then recovered at convergence of the proposed iteration.

In~\cite{RiemIRKA}, the authors propose the extension of IRKA with a line search mechanism to prevent the method from reaching unstable regions of the search space. This, however, is parameter-dependent and can be costly from a computational standpoint. In Section~\ref{sec:novelTheor1}, we propose a completely new alternative framework to IRKA that ensures the stability of the ROM and avoids the downsides of the line search-based mechanism. For a deep discussion on IRKA we recommend~\cite{gugercin2008h_2} and~\cite{RiemIRKA}.

Let $\ba = (a_0,a_1,\ldots)$ be a sequence in $\D$ and consider the Malmquist-Takenaka (MT) functions in $H_2(\D)$ defined by
\begin{equation}
    \label{eq:H2DMT}
    \varphi_k^{\ba}(z) := \frac{\sqrt{1 - |a_k|^2}}{1 - \overline{a}_k z} \prod_{j=0}^{k-1} \frac{z - a_j}{1 - \overline{a_j} z} \quad (z \in \D \cup \T).
\end{equation}
In~\eqref{eq:H2DMT}, the terms $\frac{z - a_j}{1 - \overline{a}_j z}$ are called Blaschke factors. These are Möbius transformations defined on special linear group $SL(2)$. They also describe congruence transformations in the Poincar\'e disk model of Bolyai's hyperbolic geometry. These properties make them a key component of the Newton update in Section~\ref{subsec:rgd}. The MT functions in~\eqref{eq:H2DMT} form an orthonormal system with respect to the $H_2(\D)$ inner product 
$$
    {\langle \bF, \bG \rangle}_{H_2(\D)} = \frac{1}{2 \pi} \int_{-\pi}^{\pi} \bF\big(\eu^{\iu \theta }\big) \overline{\bG\big(\eu^{\iu \theta }\big)}\, \du \theta \quad (\bF, \bG \in H_2(\D)).
$$
In fact, the MT system is complete if the generating sequence $\ba$ satisfies the Sz\'asz condition (see, e.g.,~\cite{SzegoProof}). This is always true for $p$-periodic MT systems, with generating sequence
$$
    a_{i} = a_{i + kp}, \quad (i=0,\ldots,p-1, \ k \in \N).
$$
Consider the $p$-th MT-Fourier partial sum of $\bG \in H_2(\D)$,
\begin{equation}
    \label{eq:partsum}
    \cS_p^{\ba}[\bG] = \sum_{k=0}^{p-1} {\langle \bG, \varphi_k^{\ba} \rangle}_{H_2(\D)} \varphi_k^{\ba}.
\end{equation}
This is an orthogonal projection of $\bG$ onto the subspace
\begin{equation}
    \label{eq:modspace}
    \myspan \{ \varphi_0, \ldots, \varphi_{p-1} \} = \myspan \{ \xi_{a_0}, \ldots, \xi_{a_{p-1}} \} \subset H_2(\D),
\end{equation}
where $\xi_{a_k}$ is defined according to~\eqref{eq:ckerslice}. By~\eqref{eq:modspace}, if $a_{k-1} = \lambda_k \ (k=1,\ldots,n)$, we have $\bG = \cS_n^{\ba}[\bG]$ if $\bG \in \fomspace$.

We can express the pointwise error of the ROM directly using periodic MT expansions (see, e.g.,~\cite{soumelidis2017hyperbolic}). Let $\ba \in \D^r$. The $r$-periodic sequence generated by the vector $\ba$ is also denoted by $\ba$ henceforth. Then $\bG_r := \cS_r^{\ba}[\bG]$ satisfies
\begin{multline}\label{eq:discPointWise}
   \bG(z) - \cS_r^{\ba}[\bG](z) \\ = B^{\ba}(z) \left( \sum_{k=1}^n \frac{\res_k}{1 - \overline{\lambda}_k z} \overline{B^{\ba}(\lambda_k)} \right) \quad (z \in \D \cup \T),
\end{multline}
where
\begin{equation}
    \label{eq:Bprod}
    B^{\ba}(z) := \prod_{j=0}^{r-1} \frac{z - a_j}{1 - \overline{a}_j z}
\end{equation}
denotes the $r$-term Blaschke product. It satisfies $B^{\ba}(a_j) = 0$ for $j=0,\ldots,r-1$, hence the projection $\bG_r = \cS_r^{\ba}[\bG]$ interpolates the FOM transfer function $\bG$ at $\{ a_j \}_{j=0}^{r-1} \subset \D$ by~\eqref{eq:discPointWise}.

\section{A novel $H_2(\mathbb{D})$-optimality condition}
\label{sec:novelTheor1}
We pose the $H_2(\D)$-optimal MOR problem from~\eqref{eq:H2errfunc} as a zero finding task. 
\begin{theorem}[A necessary $H_2(\D)$ optimality condition]
\label{thm:thmnewconds}
    Let the FOM transfer function $\bG \in \fomspace$. Consider the \emph{equilibrium function} $F_\bG : \D^r \to \C^r$, where for $\bz = (z_1,\ldots,z_r) \in \D^r$,
    \begin{equation}\label{eq:FG}
\big[F_\bG(\bz)\big]_j:= \sum_{k=1}^n \frac{\res_k}{1 - \overline{\lambda_k} z_j} \overline{B^{\bz}(\lambda_k)} \quad (j=1,\ldots,r),
    \end{equation}
where $B^{\bz}$ is an $r$-term Blaschke product (see~\eqref{eq:Bprod}).
    If an order-$r$ ROM transfer function $\bG_r \in \romspace$ is a local minimizer of $E$ from~\eqref{eq:H2errfunc}, then 
    \begin{equation}
        \label{eq:equibfun}
F_\bG\big(\boldsymbol{\tlambda}\big) = 0,
    \end{equation}
    where $\boldsymbol{\tlambda} =  %denotes the $r$-periodic $\D$ sequence generated by
    \big(\tlambda_j\big)_{j=1}^r \in \D^r$ %that 
    coincides with the mirror image poles of $\bG_r$. %The map $F_\bG$ is referred to as the order-$r$ \emph{equilibrium function} of the FOM $\bG$.
\end{theorem}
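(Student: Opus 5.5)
The plan is to combine the Meier--Luenberger conditions of Theorem~\ref{thm:MLcond} with the pointwise error formula~\eqref{eq:discPointWise}. We take $\ba := \boldsymbol{\tlambda}$ as the generating vector of the $r$-periodic MT system. Then $\bG - \cS_r^{\boldsymbol{\tlambda}}[\bG]$ is the Blaschke product $B^{\boldsymbol{\tlambda}}$ times a function whose values at the points $\tlambda_j$ are exactly the components $[F_\bG(\boldsymbol{\tlambda})]_j$. The proof therefore reduces to two claims: (i) $\bG_r$ coincides with the projection $\cS_r^{\boldsymbol{\tlambda}}[\bG]$, and (ii) the derivative (Hermite) interpolation condition forces the second factor to vanish at each $\tlambda_j$.

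\emph{Step 1: identify $\bG_r$ with the MT partial sum.} Since $\bG_r \in \romspace$ has exact order $r$ with pairwise distinct mirror image poles $\tlambda_1,\ldots,\tlambda_r \in \D$, it lies in $V := \myspan\{\xi_{\tlambda_1},\ldots,\xi_{\tlambda_r}\}$. By~\eqref{eq:modspace}, $\cS_r^{\boldsymbol{\tlambda}}[\bG]$ also lies in $V$. By~\eqref{eq:discPointWise} and $B^{\boldsymbol{\tlambda}}(\tlambda_j)=0$, it interpolates $\bG$ at every $\tlambda_j$. By Theorem~\ref{thm:MLcond} (exact order $r$ is needed here, as noted in the erratum), $\bG_r$ interpolates $\bG$ at the same points. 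Hence $D:=\bG_r - \cS_r^{\boldsymbol{\tlambda}}[\bG] = \sum_i c_i \xi_{\tlambda_i}$ satisfies $D(\tlambda_j) = 0$ for all $j$. This is the homogeneous linear system with the Pick/Cauchy matrix $\big[1/(1-\overline{\tlambda_i}\tlambda_j)\big]_{i,j}$. That matrix is the Gram matrix of the linearly independent kernels $\xi_{\tlambda_i}$, since $\langle \xi_{\tlambda_i},\xi_{\tlambda_j}\rangle_{H_2(\D)} = \xi_{\tlambda_i}(\tlambda_j)$. It is therefore positive definite, so $c=0$ and $\bG_r = \cS_r^{\boldsymbol{\tlambda}}[\bG]$. (Alternatively: for fixed poles, $E^2$ is a convex quadratic in the residues, so a local minimizer must have the orthogonal-projection residues.)

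\emph{Step 2: differentiate the error formula.} Write $h(z) := \sum_{k=1}^n \frac{\res_k}{1-\overline{\lambda_k}z}\overline{B^{\boldsymbol{\tlambda}}(\lambda_k)}$. By~\eqref{eq:discPointWise},
\begin{equation*}
\bG - \bG_r = B^{\boldsymbol{\tlambda}}\, h \quad \text{on } \D,
\end{equation*}
so $(\bG-\bG_r)'(\tlambda_j) = (B^{\boldsymbol{\tlambda}})'(\tlambda_j)\, h(\tlambda_j) + B^{\boldsymbol{\tlambda}}(\tlambda_j)\, h'(\tlambda_j)$. The second term vanishes because $B^{\boldsymbol{\tlambda}}(\tlambda_j) = 0$. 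The Hermite condition in~\eqref{eq:MLC} makes the left-hand side zero. Since the $\tlambda_j$ are pairwise distinct, each is a simple zero of $B^{\boldsymbol{\tlambda}}$. Explicitly,
\begin{equation*}
(B^{\boldsymbol{\tlambda}})'(\tlambda_j) = \frac{1}{1-|\tlambda_j|^2}\prod_{i\neq j}\frac{\tlambda_j-\tlambda_i}{1-\overline{\tlambda_i}\tlambda_j} \neq 0 .
\end{equation*}
Hence $h(\tlambda_j) = 0$, which is precisely $[F_\bG(\boldsymbol{\tlambda})]_j = 0$ by~\eqref{eq:FG}. This holds for every $j$, proving~\eqref{eq:equibfun}.

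\emph{Main obstacle.} I expect Step 1 to be the delicate point. Theorem~\ref{thm:MLcond} gives interpolation properties of $\bG_r$, whereas~\eqref{eq:discPointWise} is a statement about the projection $\cS_r^{\boldsymbol{\tlambda}}[\bG]$. Bridging the two requires the exact-order and distinct-pole assumptions, which guarantee that $V$ is genuinely $r$-dimensional and that the Cauchy/Gram matrix is invertible. Step 2 is then a routine product-rule computation.
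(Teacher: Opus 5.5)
Your proof is correct and follows essentially the same route as the paper: identify $\bG_r$ with $\cS_r^{\boldsymbol{\tlambda}}[\bG]$, differentiate the pointwise error formula~\eqref{eq:discPointWise}, note that the $B^{\boldsymbol{\tlambda}}(\tlambda_j)$ term vanishes, and use $(B^{\boldsymbol{\tlambda}})'(\tlambda_j)\neq 0$ for pairwise distinct mirror image poles. The differences are minor: your Step 1 actually justifies $\bG_r = \cS_r^{\boldsymbol{\tlambda}}[\bG]$ (via the positive definite Gram matrix, or via convexity in the residues), which the paper only asserts, and your explicit formula for $(B^{\boldsymbol{\tlambda}})'(\tlambda_j)$ replaces the paper's $P/Q$ argument.
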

\begin{proof}
    %Without loss of generality, we can assume that since $\bG_r$ is a local minimizer of $E$ from~\eqref{eq:H2errfunc}, $\bG_r$ is a projection onto the subspace $\myspan\left\{\xi_{\tlambda_1}, \ldots, \xi_{\tlambda_r}\right\}$, hence 
    Note that $\bG_r = \cS_r^{\boldsymbol{\tlambda}}[\bG]$. By~\eqref{eq:discPointWise}, for any $\boldsymbol{\tlambda} \in \D^r$, since $\bG_r = \cS_r^{\boldsymbol{\tlambda}}[\bG]$, we have 
    \begin{equation}
        \label{eq:interp}
        \bG(\tlambda_j) = \bG_r(\tlambda_j) \quad (j=1,\ldots,r).
    \end{equation}
    By Theorem~\ref{thm:MLcond}, a local minimzer $\bG_r$ also satisfies
    \begin{equation}
        \label{eq:hinterp}
        \bG'(\tlambda_j) = \bG_r'(\tlambda_j) \quad (j=1,\ldots,r).
    \end{equation}
    We show that~\eqref{eq:hinterp} is equivalent to~\eqref{eq:equibfun} under our assumptions. Differentiating both sides of~\eqref{eq:discPointWise} gives
    \begin{multline}
        \label{eq:deriv}
       \bG'(w) - \left( \cS_r^{\bz}[\bG] \right)'(w) \\ =  \left(B^{\bz}\right)'(w) \left( \sum_{k=1}^n \frac{\res_k}{1 - \overline{\lambda}_k w} \overline{B^{\bz}(\lambda_k)} \right) \\ + B^{\bz}(w)  \left( \sum_{k=1}^n \frac{\res_k}{1 - \overline{\lambda}_k w} \overline{B^{\bz}(\lambda_k)}\right)',
    \end{multline}
    where $w \in \D$. Notice that by~\eqref{eq:Bprod}, for $w \in \big\{z_j\big\}_{j=1}^r$ we have $B^{\bz}(w) = 0$. Fixing $\bz = \boldsymbol{\tlambda}$ and choosing $w \in \big\{\tlambda_j\big\}_{j=1}^r$, i.e., the parameters that determine $\bG_r$, it suffices to ensure
    $$
    \left( B^{\boldsymbol{\tlambda}}\right)'(\tlambda_j) \left( \sum_{k=1}^n \frac{\res_k}{1 - \overline{\lambda}_k \tlambda_j} \overline{B^{\tilde{\blambda}}(\lambda_k)} \right) = 0 \quad (j=1,\ldots,r).
    $$
   We note that $\left(B^{\boldsymbol{\tlambda}}\right)'(\tlambda_j) = 0$ is only satisfied if $\tlambda_j$ is a mirror image pole with multiplicity at least $2$. Indeed, since
   $$
        B^{\boldsymbol{\tlambda}}(\tlambda_j) = \frac{P(\tlambda_j)}{Q(\tlambda_j)} = 0
   $$
   for polynomials $P$ and $Q$ of degree $r$ (see~\eqref{eq:Bprod}),
   we have $P(\tlambda_j) = 0$,  However,
   $$
        \left( B^{\boldsymbol{\tlambda}}\right)'(\tlambda_j) = \frac{P{'}(\tlambda_j) Q(\tlambda_j) - P(\tlambda_j)Q{'}(\tlambda_j)}{\left( Q(\tlambda_j) \right)^2} = \frac{P{'}(\tlambda_j)}{Q(\tlambda_j)},
   $$
   and thus $\left( B^{\boldsymbol{\tlambda}}\right)'(\tlambda_j) = 0$ is equivalent to $P'(\tlambda_j) = 0$. Since by assumption, the mirror image poles of $\bG_r$ are simple, we therefore find $\left( B^{\boldsymbol{\tlambda}}\right)'(\tlambda_j) \neq 0$. Thus,  
    \begin{equation*}
        \bG'(\tlambda_j) - \bG_r'(\tlambda_j) = 0 
        \iff \big[F_{\bG}(\boldsymbol{\tlambda})\big]_j = 0 \quad (j=1,\ldots,r).
    \end{equation*}
\end{proof}

\section{Zero finding strategies}
\label{subsec:rgd}
Now we construct a Newton-like iteration, with the aim of obtaining a $\D^r$-sequence $(\bz^{(\nu)})_{\nu=0}^{\infty}$ such that
\begin{equation*}
    \lim_{\nu \to \infty} \bz^{(\nu)} = \boldsymbol{\tlambda},
\end{equation*}
where $\boldsymbol{\tlambda} \in \D^{r}$ satisfies~\eqref{eq:equibfun}. Thus, it characterizes the ROM as $\bG_r = \cS_r^{\boldsymbol{\tlambda}}[\bG] \in \romspace,$ which satisfies the necessary optimality conditions of Theorems~\ref{thm:MLcond} and~\ref{thm:thmnewconds}.
Equation~\eqref{eq:equibfun} defines a system of nonlinear equations in \(
\bz = (z_1,\ldots,z_{r}) \in \mathbb{D}^r
\),
where the dependence on $\bz$ is implicit through the Blaschke product $\overline{B^{\bz}}$.
Hence, determining $\bz$ is a root-finding problem that can be solved with a Newton-like method~\cite{dozsa2023inverses}.

\subsection{Hyperbolic correction of the Newton step}
Consider $F_\bG :\mathbb{D}^r\to\mathbb{C}^r$ from~\eqref{eq:FG}.
Let $\Delta \bz_{\rm N}^{(\nu)}$ denote the Newton correction defined at the $\nu$-th iteration by \begin{equation}\label{eq:NewtonSystem}
    J\big(\bz^{(\nu)}\big) \Delta \bz_\mathrm{N}^{(\nu)} = -F_\bG\big(\bz^{(\nu)}\big),
\end{equation}
where $J(\bz)$ is the Jacobian of $F_{\bG}(\bz)$ (in the Wirtinger sense, see,  e.g.,~\cite{henrici1988applied}). Therefore, a classical Newton update is
\begin{equation}\label{eq:NewtonStep}
\bz^{(\nu+1)} = \bz^{(\nu)} + \Delta \bz_\mathrm{N}^{(\nu)}.    
\end{equation}
Following~\eqref{eq:NewtonStep} can exhibit fast local convergence, but it provides no intrinsic mechanism to preserve the constraint
\(
\bz \in \mathbb{D}^r.
\)
In particular, if the update satisfies $\left|z_j^{(\nu)}+\Delta z_{\textrm{N},j}^{(\nu)}\right|\geq 1$ for some $j$, then $\mathcal{S}_r^{\bz^{(\nu )}}[\bG]\notin \romspace$. In Section~\ref{sec:exp}, we report an example where this behavior occurs, motivating the following modifications to the Newton step.

A common remedy is to impose explicit safeguards in Euclidean space, e.g., by line-search,
\(
\bz^{(\nu+1)}=\bz^{(\nu)} + \eta \Delta \bz_\mathrm{N}^{(\nu)}
\)
with $\eta\in(0,1]$ chosen so that $\bz^{(\nu+1)}\in\mathbb{D}^r$.
However, it is parameter-dependent, may fail to preserve stability if the parameters are not properly tuned, and introduces additional computational overhead. Similar problems manifest in the line search variant of IRKA~\cite{RiemIRKA}.

An alternative approach is to reinterpret the root-finding task in~\eqref{eq:equibfun} as a constrained minimization problem, e.g., 
\begin{equation}
\label{eq:mintask}
\min_{\bz \in \mathbb{D}^r}\frac{1}{2}{\|F_\bG(\bz)\|}_{2}^2,
\end{equation}
or, equivalently, as a sequence of trust-region subproblems in which each step is constrained to remain inside $\mathbb{D}^r$. Nevertheless, this approach can cause $z_j^{(\nu)}\in \T$ for some $j\in \{1,\dots,r\}$ due to projections~\cite{conn2000trust}.

In this work, we instead solve~\eqref{eq:equibfun} while ensuring that the ROM transfer function satisfies $\mathcal{S}_r^{\bz^{(\nu )}}[\bG]\in \romspace$ at each iteration. This is achieved by modifying the Newton step in~\eqref{eq:NewtonStep} so that it becomes compatible with the geometry induced by the pseudo-hyperbolic metric
\begin{equation}
\label{eq:pseudmet}\rho(z_1,z_2)
:=
\left|\frac{z_2 - z_1}{1-\overline{z_1}z_2}\right|=\left| B^{z_1}(z_2) \right|
\quad (z_1,z_2 \in \mathbb{D})
.
\end{equation}
In this view, the unit disk is not merely a convenient constraint set, but carries a natural intrinsic geometry induced by Blaschke factors \eqref{eq:Bprod}.

In the following, we adopt a Riemannian viewpoint. Therefore, we assume that the reader is familiar with basic notions and terminology of Riemannian optimization. For an overview, we recommend~\cite{boumal2023intro,absil2008matrix}.

Consider the Poincar\'e disk model $(\mathbb{D},g_{\mathbb{D}})$, where the open unit disk
\(
\mathbb{D} \subset \mathbb{C}
\)
is endowed with the hyperbolic metric
\begin{equation}
\label{eq:poincare_metric}
g_{\mathbb{D}}(z) = \gamma(z)^2\, |\du z|^2,
\qquad
\gamma(z):=\frac{2}{1-|z|^2},
\end{equation}
where $|\cdot|$ is the standard Euclidean metric.\footnote{The term ``metric'' has been used in way that is standard in Riemannian geometry.} 
Under this metric, the unit disk becomes a complete Riemannian manifold.

A key property of the Poincar\'e geometry is that the boundary $\T$ lies at an infinite hyperbolic distance. Consequently, geodesic-type updates generated on $(\mathbb{D},g_{\mathbb{D}})$ cannot reach the boundary in a finite number of steps.
Let $\oplus$ denote the M\"obius addition on $\mathbb{D}$,
\[
z\oplus w := \frac{z+w}{1+\overline{z}\,w}=B^{-z}(w)
\quad (z,w\in\mathbb{D}).
\]
The exponential map of the Poincar\'e disk  admits the following closed form expression
\begin{equation}
\label{eq:ExpMap}
\operatorname{Exp}_z(q)
=
z \oplus 
\left(
\tanh\!\left(\frac{\gamma(z)\,|q|}{2}\right)\frac{q}{|q|}
\right),
\end{equation}
with the convention $q/|q|:=0$ if $|q|=0$. Thus, starting from $z\in \D$, $\operatorname{Exp}_z(q)$ moves in the direction $q\in T_z\D$ along a geodesic. Here, $T_z \D$ denotes the tangent space of $\D$ at $z$.

The product manifold
\begin{equation*}
    \label{eq:manifM}
    \frak{M}:=(\mathbb{D},g_\D)^r
\end{equation*}
equipped with the product metric inherits the structure of a smooth Riemannian manifold. In practice, this implies that intrinsic update rules on $\frak{M}$ preserve the stability constraint $\bz \in \D^r$.
%\subsection{Geodesic correction of the Newton direction}
Let 
\(
T_{\bz}\frak{M}
\)
 be the tangent space of $\frak{M}$ at $ \bz\in\frak{M}$.
Since $\mathbb{D}$ is an open subset of $\mathbb{C}$, we identify 
\(
T_{\bz}\frak{M} \simeq \mathbb{C}^r.
\)

To account for the hyperbolic geometry we apply a metric-induced scaling to the Newton correction by
%Define the hyperbolically corrected Newton displacement by
\begin{equation}
\label{eq:HyperbolicScaling}
\Delta \bz_\mathrm{H}^{(\nu)}
:=
\operatorname{diag}\!\Big(\gamma\big(z_1^{(\nu)}\big)^{-2},\ldots,\gamma\big(z_{r}^{(\nu)}\big)^{-2}\Big)\,
\Delta \bz_\mathrm{N}^{(\nu)}.
\end{equation}
Equivalently, component-wise we obtain
\(
\big[\Delta \bz_\mathrm{H}^{(\nu)}\big]_j = \frac{\big(1-\big|z_j^{(\nu)}\big|^2\big)^2}{4}\big[\Delta \bz_\mathrm{N}^{(\nu)}\big]_j
\) for $j=1,\ldots,r$.
We realize the corrected Newton update by moving along hyperbolic geodesics, i.e.,
\begin{equation}
\label{eq:GeodesicNewton}
z_j^{(\nu+1)}
=
\operatorname{Exp}_{z_j^{(\nu)}}\!\left(\big[\Delta \bz_\mathrm{H}^{(\nu)}\big]_j\right)
\quad (j=1,\ldots,r).
\end{equation}
On $\frak{M}$,  $\operatorname{Exp}_{\bz}$ is applied component-wise as in~\eqref{eq:GeodesicNewton}.

Thus, the Newton direction $\Delta \bz_\mathrm{N}^{(\nu)}$ is preserved, while the hyperbolic geometry modifies how this displacement is realized on the manifold.
By construction, $\operatorname{Exp}_{z_j}(q)\in\mathbb{D}$ for all $z_j\in\mathbb{D}$ and $q\in T_{z_j}\mathbb{D}$, and the iterates~\eqref{eq:GeodesicNewton} remain in $\mathbb{D}^r$.

\iffalse
\begin{remark}
    Following the Newton update in~\eqref{eq:GeodesicNewton} for finding the zeros of the equilibrium function $F_\bG$ is equivalent to taking a step with Riemannian gradient descent while solving problem~\eqref{eq:mintask} with manifold $\frak{M}$ (see~\eqref{eq:manifM}).
\end{remark}
\fi

\subsection{Hybrid strategy}
Purely geometry-aware updates can be more expensive than Euclidean Newton iterations due to the cost of
evaluating the exponential map.
Moreover, far from the boundary, Euclidean Newton steps are safe and converge rapidly.

For this reason, we employ a hybrid strategy:
\begin{equation}
\label{eq:hybrid}
\bz^{(\nu+1)}
=
\begin{cases}
\bz^{(\nu)} + \Delta \bz_\mathrm{N}^{(\nu)}, &
\text{if } \bz^{(\nu)} + \Delta \bz_\mathrm{N}^{(\nu)} \in \mathbb{D}^r,\\[1mm]
\operatorname{Exp}_{\bz^{(\nu)}}\!\left(\Delta \bz_\mathrm{H}^{(\nu)}\right), &
\text{otherwise}.
\end{cases}
\end{equation}
In other words, we apply the (faster) Euclidean Newton step whenever the candidate iterate remains inside $\mathbb{D}^r$, and switch to the hyperbolically corrected step only when the Newton iterate would leave the disk.
In finite precision arithmetic, we additionally apply a small safety margin to avoid numerical issues near the boundary $\T$.

Since $\bz^{(\nu)}$ must lie in $\mathbb{D}^r$ to ensure stability, the update~\eqref{eq:hybrid}, similarly to~\eqref{eq:GeodesicNewton}, guarantees that \emph{every intermediate reduced model satisfies} $\bG_r^{(\nu)}=\cS_r^{\bz^{(\nu)}}[\bG]\in \romspace$. Finally, we note that using the recovered mirror image poles, we can obtain a state space model of $\bG_r$ using a Galerkin projection method introduced in~\cite[Section 4.1]{gugercin2008h_2}.

\section{Extension to continuous-time systems}\label{sec:continuous}

Often, continuous-time LTI systems are described directly by their transfer functions
\begin{equation*}
    \bH(s) := \bc^\top (s\bI_n - \bA)^{-1}\bb,
\quad
\bH_r(s) := \bc_r^\top (s\bI_r - \bA_r)^{-1}\bb_r,
\end{equation*}
where the matrices $\bA, \bb, \bc$ determine a state space representation of the system similar to~\eqref{eq:ss}, but with the assumption $t \in [0, +\infty)$. Let $\C^+$ denote the complex open right half-plane and 
\begin{equation}
    \label{eq:contKer}
    \zeta: \C^+ \times \C^+ \to \C, \quad \zeta(s, \lambda) := \zeta_{\lambda}(s) := \frac{1}{s + \overline{\lambda}}.
\end{equation}
With this, we can write the partial fraction representations
\begin{equation}
    \label{eq:trf}
    \bH(s) = \sum_{k=1}^n \res_k \zeta_{\lambda_k}(s), \quad \bH_r(s) = \sum_{k=1}^r \tres_k \zeta_{\tlambda_k}(s),
\end{equation}
where $\lambda_k = -\overline{\mu_k}$ and $\mu_k$ $(k=1,\ldots,n)$ denote eigenvalues of $\bA$,  $\tlambda_j = -\overline{\tilde{\mu}_j}$ and $\tilde{\mu}_j$ $(j=1,\ldots,r)$ denote the eigenvalues of $\bA_r$ from the state space representation. 

If both the FOM and ROM are asymptotically stable, then their transfer functions' poles lie in the open left half-plane, i.e., $\lambda_k,\tlambda_j\in \C^+$, and the transfer functions belong to the Hardy-Hilbert space $H_2(\C^+)$, defined as \begin{equation}
    \label{eq:H2cont}
H_2(\C^{+}) :=
\left\{
\bH \in \frak{A}(\C^{+}) \; : \;
\frac{1}{2\pi}\int_{-\infty}^{\infty}
|\bG(\iu\omega)|^2\,\du\omega < \infty
\right\}.
\end{equation}
Similarly to~\eqref{eq:ROMspace}, we use $\frak{G}_{\zeta}^n, \ \frak{G}_{\zeta}^r \subset H_2(\C^{+})$ to denote the spaces containing the continuous-time FOM and ROM transfer functions, respectively. To be precise, for any $n \in \N$ we define
\begin{multline*}
\frak{G}_{\zeta}^n :=  \left\{ \bG \in \myspan\left\{ \zeta_1,\ldots,\zeta_n\right\} : \right. \\ \left. \blambda \in (\C^+)^n,\, \lambda_j \neq \lambda_k \text{ for } j \neq k \right\}.
\end{multline*}
The hyperbolic Newton/hybrid iteration developed in Section~\ref{subsec:rgd} is formulated on the open unit disk
$\D$ and therefore naturally applies to discrete-time transfer functions in $H_2(\D)$.
This does not restrict applicability to continuous-time LTI systems, since $\D$ is conformally equivalent
to an open half-plane via a Möbius (Cayley) transform.

For $s \in \C^{+}$ and $z \in \D$ define
\begin{equation}
\label{eq:cayley}
\phi(s) := \frac{s-1}{s+1},
\quad
\phi^{-1}(z) := \frac{1+z}{1-z}.
\end{equation}
The map $\phi$ sends the imaginary axis to the unit circle $\T$ and the open right half-plane to $\D$.
%Hence, points in $\D$ correspond to (mirror-image) points in $\C^{+}$.
%To transport the $H_2$ inner product exactly, we use the standard weighted composition operator induced by
We use the standard weighted composition operator induced by the conformal map $\cU : H_2(\C^{+}) \to H_2(\D)$ given by
\begin{equation}
\label{eq:unitary}
\cU[\bH](z)
:=
\frac{\sqrt{2}}{1-\phi^{-1}(z)}\,
\bH\left(\phi^{-1}(z)\right).
\end{equation}
The mapping $\cU$ is unitary, i.e.,
%${\| \cU [\bH]\|}_{H_2(\D)} = {\|\bH\|}_{H_2(\C^{+})}$ and
${\langle  \cU [\bF],  \cU [\bH]\rangle}_{H_2(\D)}={\langle \bF,\bH\rangle}_{H_2(\C^{+})}$.
The inverse mapping is given by
\begin{equation}
\label{eq:unitary_inv}
\cU^{-1}[ \bG](s)
=
\frac{\sqrt{2}}{\phi(s)+1}\,
 \bG\!\left(\phi(s)\right).
\end{equation}

In conclusion, given a continuous-time transfer function $\bH \in \frak{G}_{\zeta}^n \subset H_2(\C^{+})$, we set $\bG := \cU[\bH] \in \fomspace \subset H_2(\D)$,
apply the disk-based hyperbolic Newton/hybrid iteration (Section~\ref{subsec:rgd}) to compute an asymptotically stable discrete-time ROM
$ \bG_r \in \romspace$, and finally map back $\bH_r := \cU^{-1}[ \bG_r] \in \frak{G}_\zeta^r$.
Since $\phi^{-1}(\D)=\C^{+}$, disk-feasibility of the iterates implies that the associated mirror-image points stay in $\C^{+}$,
and therefore the resulting continuous-time reduced poles lie in the open left complex half-plane,
%in $\C^{-}$,
i.e., the model realized by $\bH_r$ is asymptotically stable.

%\section{experiments:}
%\label{sec:exp}
%\subsection{Stability, IRKA vs Newton-type}
%\subsection{EXTRA: Stability fail IRKA %vs stable versions + backtracking IRKA}

\section{Numerical experiments}
\label{sec:exp}

We now demonstrate the effectiveness of the proposed MOR methods on several benchmark examples. %on the \texttt{build} example with $n=48$ (data available at~\cite{slicot})  and a discrete-time heat equation model with $n=2000$ detailed in \cite{benchmark}.  %with FOM orders $n=48$ and $n=2000$, respectively.
%All results are reproducible using the provided implementation (see code availability statement).
The $H_2$-error ${\|\bG - \bG_r\|}_{H_2(\Omega)}$ is used as performance metric and is recorded at each iteration whenever a stable ROM of order $r$ is produced. Here, we choose $\Omega = \D$ or $\Omega = \C^{+}$ depending on whether the system is discrete- or continuous-time. MOR is performed using three Newton method variants that rely on different update rules: Euclidean Newton (EN) using~\eqref{eq:NewtonStep}, hyperbolic Newton (HN) using~\eqref{eq:GeodesicNewton}, and the proposed hybrid Newton iteration (hHN) defined in~\eqref{eq:hybrid}. In addition, discrete- or continuous-time IRKA~\cite{gugercin2008h_2} is used as a baseline method.

We begin with a discrete-time experiment obtained via MATLAB's \texttt{c2d} discretization applied to the \texttt{build} benchmark with $n=48$~\cite{slicot,benchmark}. 
%Model reduction is performed using four methods: Euclidean Newton (EN), discrete-time IRKA in~\cite{gugercin2008h_2}, and the proposed hyperbolic Newton (HN) and hybrid Newton iteration (hHN). 
The experiments are conducted for reduced orders $r \in \{2,5,10\}$, and all methods are initialized from the same mirror image poles. The results, shown in Fig.~\ref{fig:multiple_r}, illustrate the $H_2(\D)$-error convergence across iterations. We find that the proposed hHN iteration matches the performance of EN, while ensuring stability of the ROM at every iteration.

Next, we consider the continuous-time version of the \texttt{build} benchmark. Here, continuous-time IRKA is used as a baseline and compared with the proposed Newton-based framework after applying the unitary transformation in Section~\ref{sec:continuous}. The comparison is performed for $r=10$, and the results are shown in Fig.~\ref{fig:compare_newton_cont} (top). In this case, IRKA exhibits persistent instability during the iteration, whereas the proposed methods achieve $H_2(\C^+)$-errors comparable to EN, while maintaining stability throughout the iteration.

Finally, we consider a large-scale heat equation example with FOM order $n=2000$, based on a spatial finite-difference discretization of the PDE
\begin{equation*}
\frac{\partial \psi(v,t)}{\partial t}
= \frac{\partial^2 \psi(v,t)}{\partial v^2}
+ \sigma \psi(v,t),
\quad v \in (0,1), \ t > 0.
\end{equation*}
The system is interpreted as an LTI control system with boundary input $\bu(t)=\psi(0,t)$ and output $\by(t)=\psi (1,t)$, representing heat injection at one end of a rod and temperature measurement at the other. The discretization order corresponds to the number of equidistantly spaced spatial grid points.
The parameter $\sigma$ shifts the system poles towards the imaginary axis, making the problem increasingly challenging and prone to instability for classical methods such as IRKA. The results for this experiment, shown in Fig.~\ref{fig:compare_newton_cont} (bottom) for $r=15$, highlight a different trade-off: HN remains stable but converges slowly due to the poor hyperbolic updates while EN converges faster but may produce unstable ROMs. 

\begin{figure}[h]
    \centering
    \includegraphics[width=0.51\textwidth]{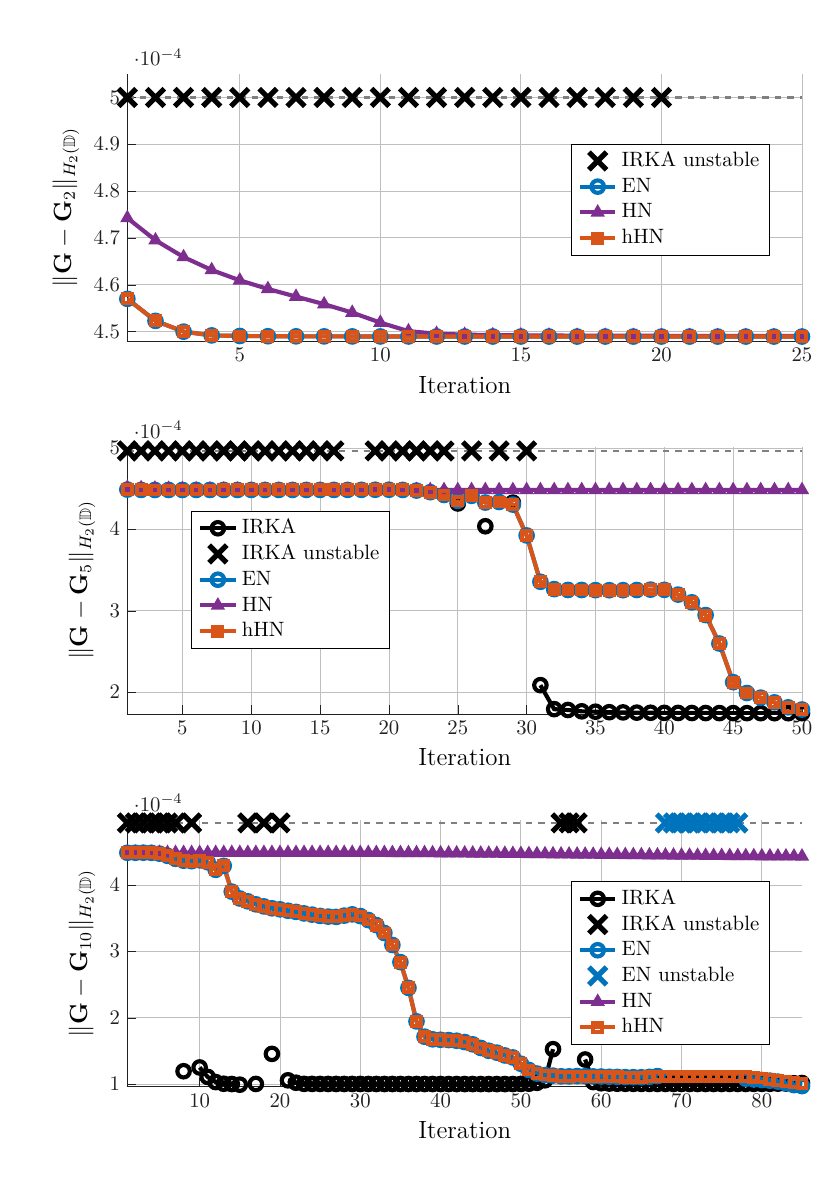}
    \caption{$H_2(\D)$-error convergence across different solvers for the discrete-time \texttt{build} benchmark ($n=48$). The sub-figures correspond to reduced orders $r \in \{2,5,10\} $. Unstable ROMs are marked with crosses.} 
    \label{fig:multiple_r}
\end{figure}

\begin{figure}[h]
    \centering
    \includegraphics[width=0.5\textwidth]{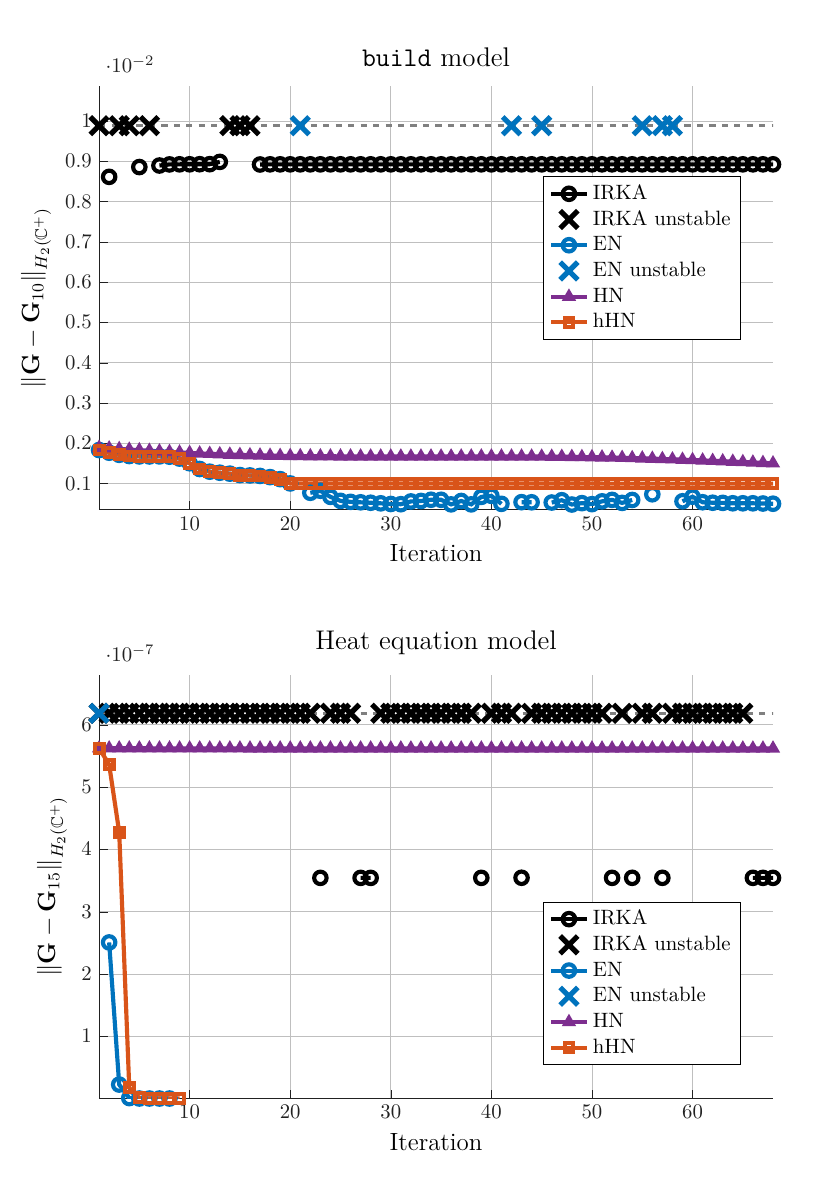}
    \caption{$H_2(\C^+)$-error convergence for the continuous-time \texttt{build} benchmark ($r=10$, top) and the heat equation benchmark ($n=2000$, $r=15$, bottom). Unstable ROMs are marked with crosses.}
    \label{fig:compare_newton_cont}
\end{figure}

In conclusion we observe that the hybrid method hHN consistently provides the best compromise, achieving fast convergence while guaranteeing stability at every iteration. Across all experiments, it emerges as the most reliable and effective approach for computing stable and accurate reduced-order models.

\section{Conclusion and future plans}
\label{sec:conc}
In this work, we proposed a novel framework for the MOR of SISO LTI systems. We introduced a new necessary condition of optimal approximation in $H_2(\D)$ and developed two novel iterative methods based on a hyperbolic Newton update. We plan to extend the developed framework to multi-input multi-output systems and aim to develop a state space representation of the equilibrium function and corresponding MOR algorithms. This would allow us to use the method on large-scale state space systems which are much more common in practice, than the pole-residue representation of the FOM assumed in this study. Finally, we plan to rigorously study the convergence properties of the proposed iteration and introduce new initialization techniques. 

\section*{Code availability}
\begin{center}
\url{https://gitlab.com/tamasdzs/mor-siso}
\end{center}

\section*{CreDit Author statement}
\textbf{Andrea Angino:} Methodology, Software, Visualization,  Writing -- Original Draft;
\textbf{Tam\'as D\'ozsa:} Conceptualization, Methodology, Software, Formal analysis, Writing -- Original draft;
\textbf{Matthias Voigt:} Investigation, Writing -- Review \& Editing, Supervision, Project administration, Funding acquisition.

\section*{Acknowledgment}
TD in part conducted his research at FernUni Schweiz, Faculty of Mathematics and Computer Scinence, where he was supported by the Swiss
Government Excellence Scholarship No. 2025.0057.

\bibliographystyle{IEEEtran}
\bibliography{refs}

@article{djukic2012dynamic,
  title={Dynamic model reduction: {A}n overview of available techniques with application to power systems},
  author={Duki{\'c}, S. D. and Sari{\'c}, A. T.},
  journal={Serb. J. Electr. Engrg.},
  volume={9},
  number={2},
  pages={131--169},
  year={2012}
}

@article{gugercin2004survey,
  title={A survey of model reduction by balanced truncation and some new results},
  author={Gugercin, S. and Antoulas, A. C.},
  journal={Internat. J. Control},
  volume={77},
  number={8},
  pages={748--766},
  year={2004},
  publisher={Taylor \& Francis}
}

@article{gugercin2008h_2,
  title={{$H_2$} model reduction for large-scale linear dynamical systems},
  author={Gugercin, S. and Antoulas, A. C. and Beattie, C.},
  journal={SIAM J. Matrix Anal. Appl.},
  volume={30},
  number={2},
  pages={609--638},
  year={2008},
  publisher={SIAM}
}

@article{FLAGG2012688,
title = {Convergence of the {I}terative {R}ational {K}rylov {A}lgorithm},
journal = {Systems  Control Lett.},
volume = {61},
number = {6},
pages = {688--691},
year = {2012},
doi = {https://doi.org/10.1016/j.sysconle.2012.03.005},
author = {Flagg, G. and Beattie, C. and Gugercin, S.}
}

@ARTICLE{RiemIRKA,
  author={Mlinarić, P. and Beattie, C. A. and Drmač, Z. and Gugercin, S.},
  journal={IEEE Trans. Automat. Control}, 
  title={I{R}{K}{A} is a {R}iemannian {G}radient {D}escent {M}ethod}, 
  year={2025},
  volume={70},
  number={5},
  pages={2979--2991},
  doi={10.1109/TAC.2024.3489416}
  }

@article{soumelidis2017hyperbolic,
  title={Hyperbolic geometrical approach to model reduction},
  author={Soumelidis, A. and Bokor, J. and Schipp, F. and Szab{\'o}, Z.},
  journal={IFAC-PapersOnLine},
  volume={50},
  number={1},
  pages={12905--12910},
  year={2017},
  publisher={Elsevier}
}

@ARTICLE{Meier,
  author={Meier, L. and Luenberger, D.},
  journal={IEEE Trans. Automat. Control}, 
  title={Approximation of linear constant systems}, 
  year={1967},
  volume={12},
  number={5},
  pages={585--588},
  doi={10.1109/TAC.1967.1098680}
}

@book{szidar,
  title={Linear Systems Theory},
  author={Szidarovszky, F. and Bahill, A. T.},
  year={1998},
  publisher={Routledge},
  address = {New York, NY},
  edition = {2nd}
}

@book{feedback,
  title={Feedback Systems: An Introduction for Scientists and Engineers},
  author={{\AA}str{\"o}m, K. J. and Murray, R. M.},
  year={2021},
  edition = {2nd},
  publisher={Princeton {U}niversity {P}ress},
  address = {Princeton, NJ}
}

@MISC{slicot,
 title =        {{SLICOT} -- {S}ubroutine {L}ibrary in {S}ystems and {C}ontrol {T}heory},
 howpublished =      {\url{http://www.slicot.org}},
 note = {Accessed: 2025-09-19}
}

@techreport{benchmark,
title = {A Collection of Benchmark Examples for Model Reduction of Linear Time Invariant Dynamical Systems},
author = {Y.~Chahlaoui and P.~{Van Dooren}},
type = {SLICOT Working Note},
number = {2002-2}, 
year = {2002},
}

@ARTICLE{SzegoProof,
  author={Ninness, B. and Gustafsson, F.},
  journal={IEEE Trans. Automat. Control}, 
  title={A unifying construction of orthonormal bases for system identification}, 
  year={1997},
  volume={42},
  number={4},
  pages={515--521},
  doi={10.1109/9.566661}
}

@article{vandenhof,
  title={Modelling and identification with rational orthogonal basis functions},
  author={Van den Hof, P. and Wahlberg, B. and Heuberger, P. and Ninness, B. and Bokor, J. and e Silva, T. O.},
  journal={IFAC Proc. Vol.},
  volume={33},
  number={15},
  pages={445--455},
  year={2000},
  publisher={Elsevier},
  doi={10.1016/S1474-6670(17)39791-4}
}

@article{dozsa2023inverses,
  title={Inverses of rational functions},
  author={D{\'o}zsa, T.},
  journal={Acta Cybernet.},
  volume={26},
  number={1},
  pages={53--69},
  year={2023}
}

@book{absil2008matrix,
  title={Optimization Algorithms on Matrix Manifolds},
  author={Absil, P.-A. and Mahony, R. and Sepulchre, R.},
  year={2008},
  publisher={Princeton University Press},
  address={Princeton, NJ},
  isbn={978-0-691-13298-3}
}

@book{boumal2023intro,
  title={An Introduction to Optimization on Smooth Manifolds},
  author={Boumal, N.},
  year={2023},
  publisher={Cambridge University Press},
  address = {Camdridge},
  doi={10.1017/9781009166164}
}

@book{conn2000trust,
  title     = {Trust Region Methods},
  author    = {Conn, Andrew R. and Gould, Nicholas I. M. and Toint, Philippe L.},
  year      = {2000},
  publisher = {SIAM},
  address   = {Philadelphia, PA},
  isbn      = {9780898714609}
}

@book{henrici1988applied,
  title={Applied and Computational Complex Analysis Volume~I: Power Series Integration Conformal Mapping Location of Zeros},
  author={Henrici, P.},
  year={1988},
  series = {Wiley Classics Lib.},
  publisher={Wiley},
  address = {Hoboken, NJ}
}

\end{document}